\newcommand{\Bbb}{{\bf}}
\newcommand{\bK}{\mathbf K}
\newcommand{\bR}{\mathbf R}
\newcommand{\rank}{\mathrm{rank}}
\newcommand{\be}{\beta}
\newcommand{\cA}{\mathcal A}
\newcommand{\cG}{\mathcal G}
\newcommand{\de}{\delta}
\newcommand{\bA}{\mathbf A}
\newcommand{\bB}{\mathbf B}
\newcommand{\bD}{\mathbf D}
\newcommand{\bC}{\mathbf C}
\newcommand{\bG}{\mathbf G}
\newcommand{\bF}{\mathbf F}
\newcommand{\mtr}[4]%
{\left(\begin{array}{cc}#1 & #2\\#3 & #4\end{array}\right)}
\newcommand{\mtrc}[2]%
{\left(\begin{array}{c}#1 \\#2 \end{array}\right)}
\newcommand{\cM}{\mathcal M}
\newcommand{\di}{\mathrm{diag}}
\newcommand{\balpha}{\boldsymbol{\alpha}}
\newcommand{\LCM}{\mathrm{LCM}}
\newcommand{\DEN}{\mathrm{DEN}}
\newcommand{\bx}{\mathbf x}
\newcommand{\ra}{\rangle}
\newcommand{\la}{\langle}
\newcommand{\hbx}{\hat{\mathbf x}}
\newcommand{\cC}{\mathcal C}
\newcommand{\si}{\sigma}
\newcommand{\al}{\alpha}
\newcommand{\hx}{\hat {x}}
\newcommand{\cD}{\mathcal D}
\spnewtheorem*{defi}{Definition}{\bfseries}{\rmfamily}
\begin{document}

\title{Effective Matrix Methods\\ in Commutative Domains
  \thanks {This paper was  published in the book
   Formal Power Series and Algebraic Combinatorics, 
   (Ed. by D.Krob, A.A.Mikhalev, A.V.Mikhalev),
   Springer, 2000, 506-517.
  No part of this materials may be reproduced, stored in retrieval system,
  or transmitted, in any form without prior permission of the copyright owner.
  }
}
\author{Gennadi~I.~Malaschonok}
\institute{Tambov State University, 392622 Tambov, Russia\\
\email{malaschonok@math-iu.tambov.su}}
\maketitle

\begin{abstract}
Effective matrix methods for solving standard linear algebra
problems in  a commutative domains are discussed. Two of them
are new. There are a methods for computing adjoined matrices and
solving system of linear equations in a commutative domains.
\end{abstract}

\section{Introduction}
   Let $ \bR $ be a commutative domain with identity, $ \bK $ the
   field of quotients of $ \bR $. We assume that
  $\,\bR\,$ is equipped with an algorithm allowing
  {\em exact division\/}. This means that if two elements
  $\,a\,$ and $\,b\,$ of $\,\bR\,$ are given ($a$ being
  different from zero) such that $\,b=ac\,$ with
  $\,c\in \bR\,$, then this algorithm can exhibit the exact
  quotient $\,c\,$. Let $\,{\bR}^{n \times m}\,$ denote
  the set of $\,n\times m\,$ matrices with entries in
  $\,{\bR}\,$.

   This paper is devoted to the review of effective matrix methods in
   the domain $\bR$  for a solution of standard linear algebra problems.
   They are (1)~multiplicating two matrices, (2)~solving  
   linear systems in $\bK$, (3)~solving linear systems in $\bR$, 
   (4)~computing the adjoint  matrix, (5)~computing the matrix
   determinant, (6)~computing the characteristic polynomial of a
   matrix.

   We shall estimate algorithms according to the  total number of
   multiplication and division operations in the ring $\bR$.

\smallskip
   (1). {\bfseries Multiplication of two matrices}. Let $ O(n^{\beta}) $ 
   be the number of  multiplication operations, that are
   necessary for multiplication of square matrices of the order $n$.
   For the standard multiplication of matrices $ \beta =3 $, for
   Strassen \cite{Str1969} algorithm $ \beta = \log 7 $, and for the best
   on today algorithm $ \beta < 2.376 $ \cite{CoppWin1990}.

\smallskip
   (2). {\bfseries Solving linear systems  in $K$}.
   Let $ Ax = b $ be a systems of linear equations whose coefficients
   belong to the commutative domain  $ \bR $: $ A\in \bR^{n\times m} $, 
   $ b\in \bR^{n} $, $x\in\bK^m$.
   The main method here is the so-called Gauss method with exact
   divisions with complexity $O(n^2 m)$ operations in $\bR$. 
   First this method was published
   in the paper of Dodgson \cite{Dod1866}, and later it was developed 
   in the works  \cite{Bar1968},  \cite{SM1968}, \cite{Mal83},
   \cite{Mal91},   \cite{Mal95}.
  We adduce the asymptotic complexity of these methods: $1.5n^2m$ 
  for Dodgson
  method \cite{Dod1866}, $1.5n^2m$ for Bareiss method \cite{Bar1968},
  $n^3$ for forward and back-up procedures \cite{Mal83}, $2/3n^2m$ for 
  one-pass method \cite{Mal91}, $7/12n^2m$ for generalized method
  \cite{Mal95}. A fast method of solving systems of linear equations 
  over commutative domain is published in the article  \cite{Mal97}.
  The complexity of this method is $O(n^{\beta-1} m)$, the same as 
  the complexity of matrix  multiplication.

\smallskip
  (3). {\bfseries Solving linear systems in $R$}. Let $ Ax = b $ be a 
  system of linear equations $ A\in \bR^{n\times m} $, $ b\in \bR^{n} $.
  The problem is to find all the solutions  $x$ of this system in the 
  module $\bR^m$.
  The particular cases of this problem are discussed in \cite{Mal86},
  \cite{Mal87}.
  A randomized algorithm for finding all the solutions in $R$ is discassed
  in  the section 3 of this paper.
It is supposed that there exists an
algorithm that is able to ascertain whether the finitely generated ideal
$I=(a_1,a_2,\ldots,a_s)$ is unit or not. If $I$ is unit, then
there are calculated the coefficients $k_i\in \bR$ in the expansion
of the unit $1=\sum_{i=1}^s k_i a_i$. It is possible to take in the
capacity of such algorithm the algorithm of the  Gr\"obner bases computation
in $\bR$.

\smallskip
(4). {\bfseries Computing the adjoint  matrix}.
The best known method of computing the adjoint matrix in an arbitary
commutative ring has the complexity 
$O(n^3 {\sqrt n} \, {\log} n \, {\log} {\log} n)$
operations of addition, substraction and multiplication \cite{Kalt92}.
If in a commutative ring the exact division is possible, then the best method 
has the complexity $O(n^3)$ operations of 
multiplication and exact division \cite{Mal87}, \cite{Mal91}.
In this work in section 2 we suggest the method with the complexity, equal to the complexity of matrix multiplication, i.e. $O(n^{\beta})$. 

\smallskip
(5). {\bfseries Computing the matrix determinant}.
The intermediate result of each algorithm \cite{Dod1866}, 
\cite{Bar1968}, \cite{Mal83}, \cite{Mal91}, \cite{Mal95}, \cite{Mal97}
for solving systems in $\bK$ is the computation of the matrix determinant.
So the asymptotic complexity of determinant computation for these methods is
$1.5n^3$, $1.5n^3$, $n^3$, $2/3n^3$, $7/12n^3$ respectively. For the 
methods \cite{Mal97} the complexity of determinant computation is $O(n^\beta)$.
The best method of computing the determinant of a matrix without divisions
was published by Kaltofen \cite{Kalt92}. Complexity of this method is
$n^3{\sqrt{n}} \log n \log \log n$.

\smallskip
 (6). {\bfseries Computing the characteristic polynomial of a matrix}.
In the case of an {\em arbitrary commutative ring\/},
the best algorithms are the Chistov one
\cite{chi}, and the Improved Berkowitz Algorithm
\cite{jou2} with size $O(n^{\beta + 1} \log {n})\,$.
In the paper \cite{AbMal}  there are described two new efficient
methods with $\,O(n^3)\,$ ring operations
(addition, subtraction, multiplication and exact
division). The first one is the Quasi-triangular  method
(with asymptotic  multiplicative  complexity $5/3n^3$) and
the second one is the Tri-diagonal method (with asymptotic 
 multiplicative  complexity $3n^3$).
As in the case of Hessenberg's method \cite{Fad63}, they proceed by
reducing the given matrix $A$ to a particular upper
quasi-triangular (Hessenberg) form, similar to $A$.

\medskip
{\bfseries Commutative domain of principal ideals.}
This is the basic application field.
In section 4 we discuss the problem of solving linear systems in 
the principal ideals domain $\bR$ and in the field of fractions $\bK$
Let $ Ax = c $ be a system of linear equations, $ A\in \bR^{n\times m} $,
$ c\in \bR^{n} $. 

\medskip
{\em{Solving linear system in the field of fractions.}}
The best known on today method for solving determined system $Ax=c$ in 
the field of fractions in the case  when $\bR=\Bbb Z$, $m=n$, $\det A \neq 0$,
is the Dixon method \cite{Dixon1982}, which uses the linear p-adic lifting.
Its complexity is $O(n^3 (\log n + \log \|A\| +\log p)^2 )$ bit operations.

If $m>n$, $\bR=\Bbb Z$ and Gauss method with exact divisions is used then
solving  system $Ax=c$ in usual arithmetic needs
$O(m n^4 (\log n + \log \|A\|)^2 )$ bit operations. $\|A\|$ denotes
the absolute value of the greatest coefficient of the system.
Using the Chinese remaindering method may reduce the complexity
up to $O(mn^3 (\log n + \log \|A\|)^2 )$ bit operations \cite{Bar1972}.

My approach to this problem uses p-adic lifting
like  in \cite{Dixon1982}. The complexity of the algorithm in the
case of the ring $\Bbb Z$ is 
$O((m-n+1)n^2 m(\log m +\log \|A\| +\log p)^2)$ bit operations.

\medskip
{\it{Solving linear system in the principal ideal domain.}}
In  \cite{MulStor99} there is given the randomized algorithm
for finding one solution of a system in the domain $\bR$, in the cases
when $\bR=\Bbb Z$ and $\bR=F[x]$ (F[x] --- ring of polynomials over a field).
This method is based on the Dixon algorithm. 

I suggest a randomized algorithm for finding all the solutions of a 
system in a commutative domain of principal ideals. It is based on using
p-adic lifting. Its complexity is essentially cubic in the
dimension of system like \cite{MulStor99},
but the number of matrix inversions is now $m-n$ times less.

\section{Adjoint Matrix Computation}
\subsection {Introduction}
Let
${\mathcal A}= \left(\begin{array}{cc} A&C\\B&D \end{array}\right)$
be the invertible matrix and $A$ --- its invertible block.
It is possible to factorize its inverse matrix ${\mathcal A}^{-1}$:
$$\left(\begin{array}{cc} I&-A^{-1}C\\0&I \end{array}\right)
      \left(\begin{array}{cc} I&0\\0& (D-BA^{-1}C)^{-1} \end{array}\right)
      \left(\begin{array}{cc} I&0\\-B&I                 \end{array}\right)
 \left(\begin{array}{cc} A^{-1}&0\\0&I                  \end{array}\right).
 \eqno (2.1) $$
Let ${\mathcal A}$ be a matrix of the order $n=2^p$. If a
block inversion by formula (2.1) is possible for the blocks up to the
second order, then the computation of inverse matrix needs
$2^{p-1}$ second order block inversions and $6 \cdot 2^{p-k-1}$ multiplications
of blocks the order $2^k$ $(k=1,2,\ldots,p-1)$.

Using the Strassen \cite{Str1969} algorithm of matrix multiplication
for such matrix inversion we need $(21 n^\be -6 n)/5$
multiplicative operations where $\be=\log_2 7$.

The similar method is proposed in this section.

\subsection{The Theorems of Adjoint Matrix Factorization}
Let $R$ be a commutative ring, ${\mathcal A}=(a_{i,j})$---a~square 
matrix of an order $n$ over the ring $R$. We denote by
$A^{k}_{i,j}$
its square submatrix of the order $k$, obtained by the bordering
of upper left block of order $k-1$ by the row $i$ and the column $j$, $(i,j>k)$.
Its determinant is denoted by $a^{k}_{i,j}={\det} A^{k}_{i,j} $.
Denote the corner minor of the order $k$ by $\delta_k=a^{k}_{k,k}$.
The determinant of the matrix, obtained from the upper left block
$A^{k}_{k,k} $ of order $k$
by the replacement of the column $i$ by the column $j$ is denoted by
$\delta_{k(i,j)}$, $(1\leq i \leq k, k<j\leq n)$.

Consider the matrices
$$
{\mathcal A}^{(s)}_t=(a^{s}_{i,j})^{i=s,\ldots,t}_{j=s,\ldots,t}
\ \ \mbox{ and}   \ \
{\mathcal G}^{(t)}_s=(\delta_{t(i,j)})^{i=s,\ldots,t}_{j=t+1,\ldots,n}
\eqno(2.2)
$$
of the order $(t-s+1)\times(t-s+1)$ and $(t-s+1)\times(n-t)$, correspondingly.

With the preceding notation the determinant Sylvester identity \cite{AlkMal96}
may be written in the following way:
$$
{\det} {\mathcal A}^{(s)}_t = \delta_{s-1}^{t-s} \delta_t
\eqno(2.3)
$$
where $1<s<t\leq n$.

Let us prove the two main theorems of the adjoint matrix factorization.
\begin{theorem}
Let a square matrix $\cA$ of the order $n$ over the ring $R$ be decomposed
into the blocks
$$
{\cA}= \left(\begin{array}{cc} A&C\\B&D \end{array}\right),
$$
$A$ is the square block of order $s$, $(1<s<n)$,
whose determinant $\delta_s$ is not zero or zero divisor in 
$R$. Then the adjoint matrix ${\cA}^{*}$ can be factorized
$$
\left(\begin{array}{cc}
    \delta_s^{-1}\delta_n I &-\delta_s^{-1}FC \\ 0&I \end{array}\right)
\left(\begin{array}{cc} I&0\\0&G           \end{array}\right)
\left(\begin{array}{cc} I&0\\-B&\delta_s I \end{array}\right)
\left(\begin{array}{cc} F&0\\0&I           \end{array}\right),
\eqno (2.4) $$
where
$F=A^{*}$, $G=\delta_s^{-n+s+1}{\mathcal A}_n^{(s+1)*}$, $I$ is the unit matrix
and the following identity takes place:
$$
{\mathcal A}_n^{(s+1)}=\delta_s D -BFC. \eqno (2.5)
$$
\end{theorem}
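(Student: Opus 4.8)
The strategy is to reduce the claimed factorization of $\cA^{*}$ to the standard block factorization of the inverse $\cA^{-1}$ in (2.1), using the relation $\cA^{*} = \det(\cA)\,\cA^{-1} = \delta_n \cA^{-1}$ (valid generically, and then by a density/universality argument over $R$). First I would record the identity (2.5): the block $\cA_n^{(s+1)}$ is, by definition (2.2), the matrix of bordered minors $a^{s+1}_{i,j}$ for $i,j = s+1,\dots,n$, and by the Sylvester identity applied to each $3\times 3$-style bordering, each entry $a^{s+1}_{i,j}$ equals $\delta_s d_{i,j} - (\text{row }i\text{ of }B)(A^{*})(\text{column }j\text{ of }C)$; assembling these entrywise gives exactly $\delta_s D - BFC$ with $F = A^{*}$. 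This is the Sylvester identity in matrix form and I expect it to follow directly from (2.3) specialized to $t = s+1$ after expanding the determinant of $A^{s+1}_{i,j}$ along its last row and column.

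Next I would establish (2.4). Over the field of fractions $\bK$ (assuming $\delta_s \neq 0$, which is exactly the hypothesis since $\delta_s$ is not a zero divisor), $A$ is invertible with $A^{-1} = \delta_s^{-1} F$, and the Schur complement $D - BA^{-1}C = \delta_s^{-1}(\delta_s D - BFC) = \delta_s^{-1}\cA_n^{(s+1)}$ by (2.5). Substituting $A^{-1} = \delta_s^{-1}F$ and $(D - BA^{-1}C)^{-1} = \delta_s\,(\cA_n^{(s+1)})^{-1}$ into (2.1), then multiplying the whole product by $\delta_n$ and distributing the scalar across the four factors in the natural way (one power of $\delta_s$ into the lower-right corner of the third factor, the compensating $\delta_s^{-1}$ into the first and second factors, and $\delta_n$ into the top-left of the first factor), I would match each of the four matrices in (2.4). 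The only nontrivial identification is the middle factor: $\mathrm{diag}(I,\,(D-BA^{-1}C)^{-1})$ carries a scalar $\delta_s$ and also the $\delta_n$ from $\cA^{*} = \delta_n\cA^{-1}$, and one must check that $\delta_s \cdot (\cA_n^{(s+1)})^{-1}$ together with the leftover determinant factors equals $G = \delta_s^{-n+s+1}\,\cA_n^{(s+1)*}$. Here I would use $\cA_n^{(s+1)*} = \det(\cA_n^{(s+1)})\,(\cA_n^{(s+1)})^{-1}$ and the Sylvester identity (2.3) with $t = n$, namely $\det\cA_n^{(s+1)} \cdot (\text{appropriate power of }\delta_s) $ relates to $\delta_n$ up to the bookkeeping exponent $-n+s+1$; a careful exponent count closes this.

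The main obstacle is the exponent bookkeeping in $G$ and the verification that (2.4) holds as an identity over the ring $R$ itself, not merely over $\bK$. For the latter I would invoke the standard principle that a polynomial identity in the matrix entries which holds on the Zariski-dense locus $\delta_s \neq 0$ holds identically; since every entry of the claimed factorization (2.4) is a polynomial in the $a_{i,j}$ divided only by powers of $\delta_s$, and the left side $\cA^{*}$ is a polynomial, clearing $\delta_s$ from both sides reduces the ring identity to the field identity already proved. The degree/exponent check for $G$ I would handle by comparing the two sides on the generic matrix: $\det \cA_n^{(s+1)}$ has a predictable power of $\delta_s$ as a factor by (2.3), and balancing that against $\delta_s^{-n+s+1}$ and the $\delta_s$ introduced when passing from $A^{-1}$ to $\delta_s^{-1}F$ should leave precisely $\delta_n$, confirming $G$ sits correctly inside the middle factor. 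I expect no genuine difficulty beyond this accounting.
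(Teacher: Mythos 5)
Your plan is correct --- I checked the scalar bookkeeping, and after substituting $A^{-1}=\delta_s^{-1}F$, $(D-BA^{-1}C)^{-1}=\delta_s^{\,s-n+2}\delta_n^{-1}\cA_n^{(s+1)*}=\delta_s\delta_n^{-1}G$ into (2.1) and multiplying by $\delta_n$, both sides collapse to the same matrix, so the exponents in $G$ do close --- but it is a genuinely different route from the paper's. The paper never passes through the inverse: its proof (given for the more general Theorem 2, of which this statement is the base case) multiplies the four factors against the matrix from right to left and shows the product telescopes to $\delta_n I$, each step being justified by a Sylvester-type identity among minors, such as $FA=\delta_s I$ and $\delta_s^{-n+s+1}\cA_n^{(s+1)*}\cA_n^{(s+1)}=\delta_n I$; the key identity (2.5) is obtained, much as in your sketch, by two Laplace expansions (along the bordering column, then the bordering row), and everything stays inside the ring until a final cancellation by the non-zero-divisor determinant. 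Your Schur-complement derivation is shorter for Theorem~1, but it leans on the generic-matrix/permanence argument more heavily than you acknowledge: the theorem does not assume $\delta_n$ (equivalently $\det\cA_n^{(s+1)}=\delta_s^{n-s-1}\delta_n$) is a non-zero divisor, so neither $\cA$ nor the Schur complement need be invertible over any quotient ring of $R$, and $R$ is only a commutative ring here, so ``field of fractions'' must really mean the fraction field of $\mathbf{Z}[a_{i,j}]$ with working locus $\delta_s\delta_n\neq 0$; since only powers of $\delta_s$ need clearing from (2.4), the specialization back to $R$ (embedded in its localization at $\delta_s$) is nevertheless sound. What the paper's direct in-ring verification buys is that it transfers verbatim to Theorem~2, where the object being factored is itself a matrix of bordered minors and the inverse-based shortcut is far less transparent; that is the version actually used in the recursive algorithm. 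One small misattribution: the formula $a^{s+1}_{i,j}=\delta_s a_{i,j}-(\hbox{row of }B)\,A^{*}\,(\hbox{column of }C)$ is the bordered-determinant (double Laplace) expansion, not an instance of the Sylvester identity (2.3).
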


\begin{theorem}
Let the square matrix $\cA_n^{(s+1)}$ of order $n-s$, $(s>0$,\linebreak
\typeout{linebreak}
$n-s>2)$,
over the ring $R$ be decomposed into the blocks
$$
\cA_n^{(s+1)}= \left(\begin{array}{cc} \bA&\bC \\ \bB&\bD \end{array}\right),
$$
where $\bA$ is the square block of the order $t-s$, $(s<t<n)$,
$\delta_s$ and $\delta_t$ are not zeros or zero divisors in
$R$. Then the matrix $\de_{s}^{-n+s+1}\cA_n^{(s+1)*}$
can be factorized:
$$
\left(\begin{array}{cc}
    \de_t^{-1}\de_n I & -\de_t^{-1}\bF\bC \\ 0&I \end{array}\right)
\left(\begin{array}{cc} I&0   \\ 0    & \de_s^{-1}\bG \end{array}\right)
\left(\begin{array}{cc} I&0   \\ -\bB & \de_t I       \end{array}\right)
\left(\begin{array}{cc} \bF&0 \\ 0    & I             \end{array}\right),
\eqno(2.7) $$
where
$\bF=\de_{s}^{-t+s+1}\cA_t^{(s+1)*}$,
$\bG=\delta_t^{-n+t+1}\cA_n^{(t+1)*}$, $I$ is the unit matrix and
the following identity is true:
$$
\cA_n^{(t+1)}=\de_s^{-1}(\de_t\bD -\bB\bF\bC) \eqno(2.8)
$$.
\end{theorem}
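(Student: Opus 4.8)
The plan is to derive Theorem 2 as a consequence of Theorem 1 applied inside the matrix $\cA_n^{(s+1)}$, rather than by repeating the cofactor computations from scratch. The key observation is that the entries of $\cA_n^{(s+1)}$ are themselves the Sylvester-style minors $a^{s}_{i,j}$ of $\cA$, so that the corner minors of the block matrix $\cA_n^{(s+1)}$ are governed by the determinant Sylvester identity (2.3): the corner minor of order $t-s$ of $\cA_n^{(s+1)}$ equals $\de_s^{t-s}\de_t$ up to the appropriate power of $\de_s$, and similarly the full determinant of $\cA_n^{(s+1)}$ is $\de_s^{n-s-1}\de_n$. Thus the hypothesis that $\de_s$ and $\de_t$ are neither zero nor zero divisors is exactly what is needed to invoke Theorem 1 with $\cA_n^{(s+1)}$ in place of $\cA$ and $t-s$ in place of $s$.

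First I would record, via (2.3), the dictionary between the minors of $\cA_n^{(s+1)}$ and those of $\cA$: if $\bF = (\cA_n^{(s+1)})^{*}_{[\text{order }t-s]}$ is the adjoint of the leading block $\bA$, then Theorem 1 (applied to $\cA_n^{(s+1)}$) expresses $\det(\cA_n^{(s+1)})^{-1}_{\text{corner}}\cdot(\cA_n^{(s+1)})^{*}$ as the product of four triangular/block-diagonal factors built from $\bF$, $\bB$, $\bC$, and $\bG' := (\det\bA)^{-(n-s)-1+1}\,(\text{trailing block})^{*}$. Second, I would substitute the minor dictionary: $\det \bA = \de_s^{-(t-s)+1}\,(\text{corner minor of }\cA_n^{(s+1)}) $ becomes a power of $\de_s$ times $\de_t$, so $\bF = \de_s^{-t+s+1}\cA_t^{(s+1)*}$ exactly as claimed, and likewise the trailing adjoint factor becomes $\bG = \de_t^{-n+t+1}\cA_n^{(t+1)*}$. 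The scalar $\det(\cA_n^{(s+1)})^{-1}\det(\text{corner})$ turns into $\de_t^{-1}\de_n$ after cancelling the common powers of $\de_s$, which accounts for the top-left entry of the first factor in (2.7). Third, the identity (2.8) follows from (2.5) applied to $\cA_n^{(s+1)}$: (2.5) gives $(\cA_n^{(s+1)})$'s analogue of the Schur complement as $(\det\bA)\,\bD - \bB\bF\bC$, and dividing through by the appropriate power of $\de_s$ (using the dictionary $\det\bA = \de_s^{-t+s+1}\de_s^{t-s}\de_t = \de_s\de_t$... ) yields $\cA_n^{(t+1)} = \de_s^{-1}(\de_t\bD - \bB\bF\bC)$.

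The last step — matching the powers of $\de_s$ and $\de_t$ — is where I expect the real bookkeeping to live, and it is the main obstacle. One has to be careful that the overall normalization factor $\de_s^{-n+s+1}$ sitting in front of $\cA_n^{(s+1)*}$ on the left of (2.7) is consistent with the factor $\de_s^{-n+s+1}$ that Theorem 1's formula produces when its internal $\de_s$ (which here is $\det\bA = \de_s\de_t$, by Sylvester) and its internal corner minors are rewritten in terms of the original $\de$'s. I would organize this by first writing Theorem 1's conclusion purely in terms of $\det\bA$ and the submatrix adjoints of $\cA_n^{(s+1)}$, then performing a single clean substitution using (2.3) in the forms $\det\bA = \de_s^{t-s-1}\cdot\de_s\de_t / \de_s^{t-s-1} = $ (the precise Sylvester exponents), and only at the end collecting exponents. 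Once the exponent arithmetic checks out, each of the four matrix factors in (2.7) and the identity (2.8) drop out mechanically from (2.4) and (2.5), so no genuinely new linear-algebra content is required beyond Theorem 1 and the Sylvester identity (2.3).
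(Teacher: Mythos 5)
Your route is genuinely different from the paper's: the paper proves Theorem 2 by direct verification, multiplying $\cA_n^{(s+1)}$ by the four factors of (2.7) from right to left and reducing everything to three identities --- (2.8), $\bF\bA=\de_t I$, and $\de_t^{-n+t+1}\cA_n^{(t+1)*}\cA_n^{(t+1)}=\de_n I$ --- of which the last two are immediate from Sylvester's identity, while the first is established by an explicit cofactor expansion (formula (2.11) and the resulting identity $\de_s a_{i,j}^{t+1}=\de_t a_{i,j}^{s+1}-\sum_{l}a_{i,l}^{s+1}\de_{s(l,j)}$). Your plan of instead applying Theorem 1 to the minor matrix $\cM=\cA_n^{(s+1)}$ and translating back via Sylvester is attractive and can be made to work, but as written it has one concrete gap, and the gap sits exactly at identity (2.8).

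The gap is this. When you apply Theorem 1 to $\cM$ with leading block $\bA$ of order $t-s$, the object produced by (2.5) is $\det(\bA)\,\bD-\bB\bA^{*}\bC$, and Theorem 1 identifies it with the matrix of \emph{bordered minors of $\cM$}, i.e.\ the determinants $\det \cM[\{s+1,\ldots,t,i\},\{s+1,\ldots,t,j\}]$. Theorem 2 needs this same expression to equal $\de_s^{t-s}\cA_n^{(t+1)}$, whose entries $a_{i,j}^{t+1}$ are minors of the \emph{original} matrix $\cA$. Identifying ``minors of the minor matrix'' with a power of $\de_s$ times higher-order minors of $\cA$ does not follow from (2.5) applied to $\cM$, nor from the corner-minor identity (2.3) as you invoke it: (2.3) covers only the leading corner minors, whereas here one needs the bordered $(i,j)$ versions for all $i,j>t$. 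That identification is precisely the content of (2.8), and it is where the paper spends its effort. It can be obtained cheaply by applying (2.3) to the matrix with row $i$ and column $j$ permuted into position $t+1$ (the generalized Sylvester identity of \cite{AlkMal96}); with that observation added, your derivation closes. But your sketch asserts it rather than proves it, and the claim that no new content is needed beyond Theorem 1 and (2.3) is not correct as stated. A smaller error of the same kind: $\det\bA=\det\cA_t^{(s+1)}=\de_s^{t-s-1}\de_t$, not $\de_s\de_t$, so the exponent bookkeeping in your final paragraph starts from a wrong value.
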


\begin{proof}
Calculate the products of matrix $\cA_n^{(s+1)}$ by the
factors of the matrix (2.7) step by step from the right to the left:
$$
\cA_n^{(s+1)}\rightarrow
\left(\begin{array}{cc} \de_t I &\bF\bC \\ \bB&\bD \end{array}\right)
\rightarrow
\left(\begin{array}{cc} \de_t I &\bF\bC \\ 0& \de_t \bD-\bB\bF\bC
                                                          \end{array}\right)
\rightarrow
\left(\begin{array}{cc} \de_t I &\bF\bC \\ 0& \de_n I     \end{array}\right)
\rightarrow
\de_n I .
$$
It is necessary to prove the identity (2.8) and the following identities:
$$
\bF\bA=\de_t I \eqno(2.9)
$$
$$
\de_t^{-n+t+1}\cA_n^{(t+1)*}\cA_n^{(t+1)}=\de_n I  \eqno(2.10)
$$
As $\bA= \cA_t^{(s+1)}$, the equality (2.9) follows from the determinant
Sylvester identity
$
{\det} {\mathcal A}^{(s+1)}_t = \delta_{s}^{t-s-1} \delta_t
$.

The identity (2.10) follows from the determinant Sylvester identity
$${\det} \cA^{(t+1)}_n = \delta_{t}^{n-t-1} \delta_n.$$
Let us prove the identity (2.8).
Denote by $a_{l,j}^{(s+1)*}$ the cofactor of the element $(l,j)$
in the matrix $\cA^{(s+1)}_t$. From the determinant Sylvester identity
we obtain:
$$
\de_s^{t-s-1}\de_{t(i,j)}=\sum_{l=s+1}^t a_{l,i}^{(s+1)*} a_{l,j}^{s+1}.
$$
Since  $ \bF= \de_s^{-t+s+1}\cA^{(s+1)*}_t$ and $\bC$ is the block of the
matrix $\cA^{(s+1)}_n$,  the last equality for the elements implies
the matrix identity $\cG_t^{(s)}=\bF\bC$.
We decompose the determinant of the matrix $A_{i,j}^{s+1}$
according to the last column and obtain
$$
a_{i,j}^{s+1}=a_{i,j}\de_s-\sum_{l=1}^s a_{l,j}\si_{s(l,j)},
\eqno (2.11)
$$
where $\si_{s(l,j)}$ is the determinant of the matrix,
resulted from $A_{s,s}^{s}$ by the replacement of the row $l$ by the row $j$.
Let
$
\si=(\si_{s(1,i)},\si_{s(2,i)},\ldots,\si_{s(s,i)},0,0,\ldots,0),
$
$
\alpha=(a_{1,j},a_{2,j},\ldots,a_{t,j}),
$
$
\beta=(a_{i,1}^{s+1},a_{i,2}^{s+1},\ldots,a_{i,t}^{s+1})
$
denote the rows with $t$ elements. Then according to (2.11) we obtain
the matrix identity
$$
\left(\begin{array}{cc}  I & 0 \\ -\si & \de_s \end{array}\right)
\cdot A_{i,j}^{t+1} =
\left(\begin{array}{cc}
A_{t,t}^{t} & \alpha^T \\
\beta         & a_{i,j}^{s+1}
\end{array}\right).
$$
Correspondingly, we write the following determinant identity,
where the determinant of the matrix on the right  is decomposed
according to the last row:
$$
\de_s a_{i,j}^{t+1}=\de_t a_{i,j}^{s+1} -
\sum_{l=1}^t a_{i,l}^{s+1} \de_{s(l,j)}.
$$
In the matrix form it may be written as
$
\de_s\cA_n^{(t+1)}=\de_t\bD -\bB\cG_t^{(s)}.
$
Taking into account $\cG_t^{(s)}=\bF\bC $
we get the identity (2.9).
\end{proof}

\subsection{The Estimate of the Complexity }
The dimension of the upper left block $A$ in the process of the factorization
of the matrix may be taken arbitrarily. Consider the case, when the dimension
of the block $A$ is a degree of 2.
We call such decomposition of the adjoin matrix
the~{\em binary factorization}.

Let ${\it M}(n)$ be the complexity of the multiplication of
two matrices of the order $n$ and its asymptotical estimate is $\al n^\be$.

Then the complexity of the adjoint matrix calculation
for the matrix of the order $n=2^p$ by means of binary factorization is
$
{\it C}(n)=\sum_{k=0}^{p-2} 6 \cdot 2^k {\it M}(2^{n-k-1}).
$
We neglect the complexity of multiplication of a matrix by a scalar,
i.e. the terms of the order $n^2$.

Therefore, the asymptotical estimate of the complexity
of the adjoint matrix calculation is $ 3\al n^\be/(1-2^{1-\be})$.

Finally, for the relation of the complexities
of the adjoint matrix calculation and of the matrix multiplication
we obtain the asymptotical estimate
$
k(\be)=\lim_{n \to \infty}{{\it C}(n)\over {\it M}(n)}={3\over 1-2^{1-\be}}.
$
For example we have
$k(3)=4$ for classical multiplication, and $k(\log_2 7)=4.2$ for
Strassen's multiplication.

\section{Linear System Solving in Commutative Domains}
\subsection{ Solving Systems in a Field of Fractions}
Let $\bR$ be a commutative domain with an identity, 
$\bK$ be a field of fractions
of $\bR$, $A \in \bR^{n\times m}$, $\rank A=r$, $c \in \bR^{n}$,
$$
A x = c \eqno(3.1)
$$
be a system of linear equations. 
Let $S$ and $T$ be permutation matrices, which transpose linearly independent
rows and columns of the matrix  $A$ to the upper left corner. We obtain in this corner
a square matrix of size $r\times r$, denote it by $A_0$ ($\det A_0 \neq 0$).
The matrices  $SAT$ and $Sc$ may be written in a block form:
$$
SAT=\mtr {A_0}{A_1}{A_2}{A_3}, \quad 
Sc=\mtrc {c_0} {c_1}, \ \ c_0\in \bR^r.
$$
Denote by
$\cM = \{x \mid x\in\bK^m, Ax=c\}$ the set of all the solutions
of the system (3.1).

We denote by $I_r$ the identity matrix of order $r$,
$E_{i,j}$---square matrices
which have only one nonzero element---$(i,j)$, that equals 1.

We need some facts from the theory of linear equations.

{\bfseries 1.} If $\rank (A,c)\neq r$, then $\cM=\emptyset$.
If $\rank (A,c)= r$, then $\cM$ is a hyperplane of dimension $m-r$ in a space
of dimension $m$. It is defined by $m-r+1$ points, which do not belong
to one hyperplane of lower dimension.

{\bfseries 2.} If the system (3.1) is homogeneous ($c=0$)
and $x_1,x_2,\ldots, x_{m-r}$ are its linearly independent solutions
then $\cM=\{\sum_{i=1}^{m-r+1} x_i u_i \mid u_i\in\bK \}$.

{\bfseries 3.} If the system (3.1) is nonhomogeneous ($c\neq 0$) and
$x_1,x_2,\ldots, x_{m-r+1}$ are its linearly independent solutions
then
$\cM=\{\sum_{i=1}^{m-r+1} x_i u_i \mid
u_i\in\bK,\linebreak
\typeout{linebreak}
\sum_{i=1}^{m-r+1} u_i = 1 \}$.

\begin{defi}
{\em A basis set of solutions of a homogeneous system of linear
equations\/} (3.1) is a set which consists from $m-r$ linearly
independent solutions of the system (3.1).
{\em A basis set of solutions of a nonhomogeneous system of linear
equations\/} (3.1) is a set which consists from $m-r+1$ linearly
independent solutions of the system (3.1).
\end{defi}

The next two theorems reduce the problem of getting the basis solutions
of (3.1) to several problems of solving determined systems.
The first theorem considers homogeneous systems, the second --- nonhomogeneous
systems.

\begin{theorem}
Let (3.1) be a homogeneous system of linear equations and
$A_1=(a_{1},a_{2},\ldots,a_{m-r})$, $a_j\in \bR^r$.
Then the systems
$$
A_0 x_j=-a_j, \ \ \ j=1,\ldots,m-r, \ \eqno(3.2)
$$
are determined and their solutions $x_j\in \bK^r$
define the basis set of solutions of (3.1):
$$
T\mtrc {x_j} {e_j}, \quad  j=1,\ldots,m-r, 
 \eqno(3.3)
$$
where $e_j\in\bR^{m-r}$ are the columns of the identity matrix
$I_{m-r}=\linebreak
\typeout{linebreak}
(e_1,e_2,\ldots,e_{m-r})$.
\end{theorem}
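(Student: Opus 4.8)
The plan is to establish the three claims implicit in the statement --- that the systems (3.2) are determined, that the vectors (3.3) solve (3.1), and that they are $\bK$-linearly independent --- after which the conclusion that they form a basis set is immediate from the Definition and fact~\textbf{1}.

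First, since $\det A_0 \neq 0$ the block $A_0$ is invertible over $\bK$, so for each $j$ the system $A_0 x_j = -a_j$ has the unique solution $x_j = -A_0^{-1}a_j \in \bK^r$; this is exactly the assertion that (3.2) is determined.

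Second, to see that $y_j := T\mtrc{x_j}{e_j}$ solves the homogeneous system it suffices, as $S$ is invertible, to check $SAT\mtrc{x_j}{e_j}=0$. In block form,
$$SAT\mtrc{x_j}{e_j}=\mtrc{A_0x_j+A_1e_j}{A_2x_j+A_3e_j}=\mtrc{A_0x_j+a_j}{A_2x_j+A_3e_j},$$
since $A_1e_j=a_j$. The upper block vanishes by the choice of $x_j$. For the lower block I would invoke the rank hypothesis: because $\rank(SAT)=\rank A=r$ and the first $r$ rows $(A_0\ A_1)$ are already linearly independent over $\bK$ (their left $r\times r$ part $A_0$ has nonzero determinant), the row space of $SAT$ is spanned by those first $r$ rows. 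Hence there is a matrix $L\in\bK^{(n-r)\times r}$ with $A_2=LA_0$ and $A_3=LA_1$, so $A_2x_j+A_3e_j=L(A_0x_j+A_1e_j)=L(A_0x_j+a_j)=0$. This step --- reducing the ``extra'' equations to the ones already solved, using only the rank condition --- is the heart of the argument; everything else is bookkeeping.

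Third, for linear independence suppose $\sum_{j=1}^{m-r}u_jy_j=0$ with $u_j\in\bK$. Applying $T^{-1}$ gives $\mtrc{\sum_j u_jx_j}{\sum_j u_je_j}=0$; the lower component reads $\sum_j u_je_j=0$, and since $e_1,\dots,e_{m-r}$ are the columns of $I_{m-r}$ this forces $u_1=\cdots=u_{m-r}=0$. Thus (3.3) is a set of $m-r$ linearly independent solutions of the homogeneous system, i.e. a basis set of solutions in the sense of the Definition. The only edge cases worth a remark are $m=r$, where the basis set is empty (consistent with fact~\textbf{1}), and the observation that the construction does not depend on which maximal independent families of rows and columns the permutations $S$ and $T$ happened to select.
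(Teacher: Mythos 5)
Your proof is correct and follows essentially the same route as the paper's (very terse) argument: substitute $y=T^{-1}x$, reduce to the first $r$ equations $(A_0,A_1)y=0$ via the rank condition, and plug in the ansatz $y=\mtrc{x_j}{e_j}$. You merely make explicit the details the paper leaves implicit --- the invertibility of $A_0$, the elimination of the last $n-r$ equations through $(A_2,A_3)=L(A_0,A_1)$, and the linear independence read off from the $e_j$ components.
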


\begin{proof}
Denote $y=T^{-1}x$. By the condition we have
$(A_0,A_1)y=0$. We search for the solution in the form $y=\mtrc {x_j} {e_j}$,
and get the system (3.2).
\end{proof}

\begin{corollary}
Let it be $B=A_0^{-1}A_1=(b_1,b_2,\ldots,b_{m-r})$, $b_j\in \bK^r$.
 Then \linebreak
\typeout{linebreak}
$T\mtrc {-b_1} {e_1}$, $T\mtrc {-b_2} {e_2}$, $\ldots$,
$T\mtrc {-b_{m-r}} {e_{m-r}}$ is the basis set of solutions of (3.1).
\end{corollary}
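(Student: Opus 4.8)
The plan is to obtain the statement as an immediate consequence of the preceding theorem on homogeneous systems. First I would observe that, since $\det A_0\neq 0$, the block $A_0$ is invertible over the field of fractions $\bK$, so each determined system $A_0x_j=-a_j$ of~(3.2) has the unique solution $x_j=-A_0^{-1}a_j\in\bK^r$. Writing $a_j$ for the $j$-th column of $A_1$ and using $B=A_0^{-1}A_1=(b_1,\ldots,b_{m-r})$, the $j$-th column of $B$ is precisely $b_j=A_0^{-1}a_j$; hence $x_j=-b_j$ for every $j$.

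Substituting $x_j=-b_j$ into the conclusion~(3.3) of that theorem yields that the vectors $T\mtrc{-b_j}{e_j}$, $j=1,\ldots,m-r$, are solutions of the homogeneous system~(3.1) and form a basis set. The only residual point is the linear independence demanded by the definition of a basis set, and this is automatic: the lower blocks of these vectors are the columns $e_1,\ldots,e_{m-r}$ of $I_{m-r}$, which are linearly independent, and left multiplication by the permutation matrix $T$ is invertible, so no nontrivial $\bK$-linear combination of the $T\mtrc{-b_j}{e_j}$ can vanish.

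I do not expect a real obstacle; the corollary merely repackages the theorem using the shorthand $B=A_0^{-1}A_1$. The one thing to watch is the bookkeeping: one must keep the column indexing of $A_1$, of $B$, and of $I_{m-r}$ aligned so that $b_j$ is genuinely paired with $e_j$, and remember that the block columns $\mtrc{-b_j}{e_j}$ are formed over $\bK$ (the $b_j$ lie in $\bK^r$, not in $\bR^r$) before being transported back by $T$.
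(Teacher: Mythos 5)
Your proof is correct and takes the same route the paper intends: the corollary is stated without proof precisely because it is the immediate substitution $x_j=-A_0^{-1}a_j=-b_j$ into the conclusion (3.3) of Theorem 3. Your added check of linear independence via the $e_j$ blocks and the invertibility of $T$ is a harmless (and correct) extra detail.
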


\begin{theorem}
Let (3.1) be a nonhomogeneous system of linear equations,
$P$ a permutation matrix such that the last element of the vector
$b=PA_0^{-1}c_0$ is not equal to $0$.
Let it be $B=PA_0^{-1}A_1$,
$J\subset\{1,\ldots,m-r\}$ be the numbers of the columns of the matrix 
$B$ with zero elements in the last row.
Let it be  $U=I_{m-r+1}+\sum_{j\in J}E_{1,j+1}$, $W=\di(I_{r-1},U)$, 
$Q=\di(P,I_{m-r})$, $V=QW$,
$I_{m-r+1}=({e'}_0,{e'}_1,\ldots,{e'}_{m-r})$,
where ${e'}_j\in\bR^{m-r+1}$ are the columns of the unit matrix, and 
$({A'}_0,a_0,a_{1},a_{2},\ldots,a_{m-r})$=$P(A_0,A_1)V$
where  ${A'}_0\in \bR^{r\times r-1}$,  ${a}_j\in\bR^{r}$.
Then the systems
$$
({A'}_0,a_j)x_j =Pc_0, \quad  j=0,1,\ldots,m-r,
 \eqno(3.4)
$$
are determined. The solutions of these systems $x_j=\mtrc {{x'}_j}{\xi_j}$,
$ {x'}_j\in \bK^{r-1}$,
$\xi_j\in \bK$ define the basis set of solutions of (3.1):
$$
TV\mtrc {{x'}_j} {\xi_j {e'}_j}, \quad  j=0,1,\ldots,m-r.
 \eqno(3.5)
$$
\end{theorem}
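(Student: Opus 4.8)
The plan is to mimic the proof of Theorem~4 (the homogeneous case) but to carry along the normalization that forces the extra affine constraint $\sum u_i=1$ to be automatic. First I would set $y=(TV)^{-1}x$, so that the system $Ax=c$ becomes $(A_0,A_1)Vy=Sc$, i.e. after applying $P$ on the left, $P(A_0,A_1)V\,y=Pc_0$ together with the compatibility of the lower block $c_1$ (which is automatic since $\rank(A,c)=r$). By the definition of $V=QW$ the columns of $P(A_0,A_1)V$ are exactly $({A'}_0,a_0,a_1,\dots,a_{m-r})$, so the equation on $y$ reads ${A'}_0 z + \sum_{j=0}^{m-r}\eta_j a_j = Pc_0$, where $y=\mtrc{z}{(\eta_0,\dots,\eta_{m-r})^T}$ after the block split induced by $W$.

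Next I would check that each system (3.4) is determined, i.e. that the $r\times r$ matrix $({A'}_0,a_j)$ is nonsingular for every $j=0,\dots,m-r$. For $j\ge 1$ this follows because the purpose of the factor $U$ (hence of the index set $J$) is precisely to replace, in those columns $a_j$ whose last coordinate in $PA_0^{-1}A_1$ vanishes, that column by $a_j+a_0$, so that after multiplying by $PA_0^{-1}$ the last row of $({A'}_0,a_j)$ transformed into $(0,\dots,0,\ast)$ with $\ast\ne 0$ (the last entry of $b=PA_0^{-1}c_0$ is nonzero by the choice of $P$, and the columns of $B=PA_0^{-1}A_1$ that already had a nonzero last entry are left untouched). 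For $j=0$ the matrix $({A'}_0,a_0)$ is just $P A_0$ up to column permutation built into $W$, hence nonsingular. This establishes that (3.4) has a unique solution $x_j=\mtrc{{x'}_j}{\xi_j}$ in $\bK^r$.

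Then I would verify that each vector in (3.5) is a solution of (3.1): plugging $x=TV\mtrc{{x'}_j}{\xi_j{e'}_j}$ into $A$ and using $AT V=S^{-1}S A T V$ one reduces to $({A'}_0,a_0,\dots,a_{m-r})\mtrc{{x'}_j}{\xi_j{e'}_j}={A'}_0{x'}_j+\xi_j a_j = Pc_0$, which holds by definition of $x_j$; the lower block is satisfied because $\rank(A,c)=r$ forces rows $r+1,\dots,n$ to be linear combinations of the first $r$. Finally I would show linear independence of the $m-r+1$ vectors (3.5) and that they form a \emph{basis set} in the sense of the Definition, using fact~{\bfseries 3}: the $j$-th vector has, in the coordinates corresponding to ${e'}_j$ (i.e. in the $W$-introduced block), the value $\xi_j$ in position $j$ and, crucially, all of them have the property that the affine combination $\sum u_j x_j$ satisfies $Ax=c$ exactly when $\sum u_j=1$. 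I would read this off by observing that stacking the $x_j$ and looking at the last $m-r+1$ coordinates gives, after the unimodular change $W$, the matrix $\di(I_{r-1},U)$ acting on $(\xi_0{e'}_0,\dots,\xi_{m-r}{e'}_{m-r})$, whose relevant minor is $\prod_j\xi_j\ne 0$; hence the $m-r+1$ points are affinely independent, so by fact~{\bfseries 1} they determine the whole hyperplane $\cM$.

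The main obstacle I anticipate is the bookkeeping around $U$, $J$ and $P$: one must argue carefully that the single "pivot repair" column $a_0$ (added to every deficient column via $E_{1,j+1}$) simultaneously fixes all the systems with $j\in J$ without breaking the already-good ones, and that the permutation $P$ can indeed be chosen so that $(PA_0^{-1}c_0)_r\ne 0$ — this is possible exactly because $c_0\ne 0$ (nonhomogeneous case) and $A_0$ is invertible, so $A_0^{-1}c_0\ne 0$ and some coordinate can be brought to the last position. Everything else is the same determinant/Sylvester-free linear-algebra manipulation as in Theorem~4, transplanted through the extra unimodular factor $V$.
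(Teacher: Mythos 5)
Your proposal is correct and follows essentially the same route as the paper's own proof: the same change of variables $y=V^{-1}T^{-1}x$ reducing to the systems (3.4), the same left-multiplication by $PA_0^{-1}P$ to bring them to the form $(I',d_j)x_j=b$ with $d_j$ the columns of $(e,B)U$, whence nonsingularity from the nonzero last entries of the $d_j$ (repaired by $U$ on the columns indexed by $J$) and $\xi_j\neq 0$ from the nonzero last entry of $b$, and the same linear-independence argument from the bottom blocks $\xi_j e'_j$. The additional details you supply (existence of $P$, satisfaction of the lower block of equations via $\rank(A,c)=r$) merely make explicit what the paper leaves implicit.
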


\begin{proof}
Denote $y=V^{-1}T^{-1}x$. By the condition we have
$P(A_0,A_1)Vy=Pc_0$ and $P(A_0, A_1)V=({A'}_0,a_{0},a_{1},\ldots,a_{m-r})$.
If we search for the solution in the form $y=\mtrc {{x'}_j} {\xi_j {e'}_j}$,
then we obtain the systems (3.4).

Let us show that the systems (3.4) are determined and $\xi_j\neq 0$.
Multiply them by $PA_0^{-1}P$ from the left. Since $P=P^{-1}$, we get
$PA_0^{-1}P({A'}_0,a_j)x_j =PA_0^{-1}c_0$, $j=0,1,\ldots,m-r$.

As $b=PA_0^{-1}c_0$ and $PA_0^{-1}P({A'}_0,a_0)=I_r$, the first of the
systems (3.4) gets the form $I_r x_0=b$. Denote $I_r=(I',e)$, $e\in\bR^r$.
We see that  $PA_0^{-1}P{A'}_0=I'$,
$PA_0^{-1}Pa_0=e$.

Since $PA_0^{-1}P(a_0,\ldots,a_{m-r})=PA_0^{-1}P(a_0,PA_1)U=(e,B)U$,
we get $d_j=PA_0^{-1}Pa_j$, $j=1,\ldots,m-r$ are the columns of the matrix
$(e,B)U$. As $U=I_{m-r+1}+\sum_{j\in J}E_{1,j+1}$
and $J$ are the numbers of the columns 
of the matrix $B$ with zero elements in the last row,
the last elements of all columns $d_j$  of the matrix $(e,B)U$ do not equal
zero. The systems (3.4) obtain the form
$$
(I',d_j) x_j=b, \ \ \ j=0,1,\ldots,m-r, \eqno (3.6)
$$
and $\det (I',d_j)\neq 0$. Since the last element of the vector $b$
is nonzero,  solutions  $x_j=\mtrc {{x'}_j}{\xi_j}$ of the systems (3.6) are such
that $\xi_j\neq 0$. So the vectors ${\xi_j {e'}_j}$, $j=1,\ldots,m-r$,
are linearly independent, therefore the vectors (3.5)
 are linearly independent.
\end{proof}

\begin{corollary}
Let it be $B=(b_1,b_2,\ldots,b_{m-r})$,
$b=\mtrc {b'} {\be}$, $b_j=\mtrc {{b'}_j} {\be_j}$;
$\xi_j=\be/\be_j$, $f_j={e'}_j$ for $j\notin J$;
$\xi_j=\be$, $f_j={e'}_j+ {e'}_0$ for $j\in J$.
Then
$$
TQ\mtrc {b'} {\be {e'}_0}, \ \
TQ\mtrc {b'-\xi_j{b'}_j} {\xi_j {f}_j}, \quad 
 j=1,\ldots,m-r, \ \eqno(3.7)
$$
is the basis set of solutions of (3.1).
\end{corollary}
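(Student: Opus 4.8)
The plan is to specialise the preceding theorem, which already displays a basis set of solutions of (3.1) in the form $TV\mtrc{{x'}_j}{\xi_j{e'}_j}$ with $x_j=\mtrc{{x'}_j}{\xi_j}$ a solution of the determined system (3.4), and to compute all the ingredients explicitly. First I would reuse the reduction carried out inside the proof of that theorem: after left multiplication of (3.4) by $PA_0^{-1}P$ the systems become (3.6), namely $(I',d_j)x_j=b$, where $d_0=e$, $(I',e)=I_r$, and for $j=1,\ldots,m-r$ the vector $d_j$ is the $(j{+}1)$-st column of $(e,B)U$. Expanding $U=I_{m-r+1}+\sum_{k\in J}E_{1,k+1}$ shows that $d_j=b_j$ for $j\notin J$ and $d_j=b_j+e$ for $j\in J$.

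Next I would solve (3.6) by inspection, using that $I'$ is the block of the first $r-1$ columns of $I_r$ while $d_j$ occupies the last column. Write $b=\mtrc{b'}{\be}$, $b_j=\mtrc{{b'}_j}{\be_j}$, and note $e=\mtrc{0}{1}$. For $j\notin J$ the system is $(I',b_j)\mtrc{{x'}_j}{\xi_j}=b$; its last coordinate gives $\xi_j\be_j=\be$, hence $\xi_j=\be/\be_j$ (legitimate because $j\notin J$ means $\be_j\neq0$), and the first $r-1$ coordinates give ${x'}_j=b'-\xi_j{b'}_j$. For $j\in J$ the definition of $J$ gives $\be_j=0$, so $d_j=b_j+e=\mtrc{{b'}_j}{1}$; the last coordinate now yields $\xi_j=\be$ and the rest ${x'}_j=b'-\be{b'}_j$. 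For $j=0$, from $(I',e)=I_r$ we get $x_0=b$, i.e. ${x'}_0=b'$, $\xi_0=\be$. Throughout $\xi_j\neq0$ since $\be\neq0$ by the choice of $P$, and the $\xi_j$ so obtained are exactly those in the statement.

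Finally I would transport these vectors through $V=QW$ with $W=\di(I_{r-1},U)$ and $Q=\di(P,I_{m-r})$. Since $W$ leaves the first $r-1$ coordinates untouched and applies $U$ to the remaining $m-r+1$, and since $U{e'}_j={e'}_j$ for $j\notin J$ while $U{e'}_j={e'}_j+{e'}_0$ for $j\in J$, one obtains $W\mtrc{{x'}_j}{\xi_j{e'}_j}=\mtrc{{x'}_j}{\xi_j f_j}$ with $f_j$ precisely as in the statement, and therefore $TV\mtrc{{x'}_j}{\xi_j{e'}_j}=TQ\mtrc{{x'}_j}{\xi_j f_j}$, which is (3.7). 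Linear independence and the ``basis set'' property are inherited verbatim from the preceding theorem, so no further work is needed there. I expect the only genuine obstacle to be bookkeeping: keeping straight the index shift between $j$ and the column number $j+1$ under $U$, and recalling that membership in $J$ is precisely what collapses $\be_j$ to zero — once that is pinned down, the computation is mechanical.
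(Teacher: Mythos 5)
Your proposal is correct and follows essentially the same route as the paper: the paper's (very terse) proof likewise substitutes the explicit solutions of (3.6) into (3.5), using that $d_j=\mtrc{{b'}_j}{\de_j}$ with $\de_j=1$ for $j\in J$ and $\de_j=\be_j$ for $j\notin J$, and then multiplies by $W$ to convert ${e'}_j$ into $f_j$. You have merely filled in the bookkeeping (the action of $U$ on columns, the case $j=0$, and the nonvanishing of $\xi_j$) that the paper leaves implicit.
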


\begin{proof}
Substitute solutions of (3.6) into (3.5). We take into account
that according to the construction,
$d_j=\mtrc {{b'}_j} {\de_j}$, $\de_j=1$
for $j\in J$ and $\de_j=\be_j$ for $j\notin J$. Then we multiply
by the matrix $W$.
\end{proof}

Corollaries 1 and 2 allow to present the algorithm to
construct the basis set of solutions of a system of linear equations for
an arbitrary commutative domain.

\subsection{ System Solving in a Domain}
Now we consider a linear system solving
in a commutative domain. Note that it is not a problem
for homogeneous systems, since any solution in a field of fractions,
been multiplied by a suitable factor, gives the solution in the domain.
So further we shall consider only nonhomogeneous systems.

Let
$\balpha$ be a nonempty finite subset of $\bR$.
The intersection of principal ideals generated by elements of the set
$\balpha$, is the principal ideal $\cup_{p\in \balpha}(p) $.
We denote by  $\LCM(\balpha)$ the generator of
this ideal, i.e. $ (\LCM(\balpha))= \cup_{p\in \balpha}(p) $.

Let $\bK$ be a field of fractions of $\bR$, $x$ be a vector of
the space $\bK^m$, $\alpha_x \subset \bR$ be a set of denominators
of components of $x$.

\begin{defi}
{\it A denominator of a vector} $x$ is
$\chi=\LCM(\alpha_x)$. The vector $x$ will be written as a product
$x=\bx\chi^{-1}$, $\bx\in\bR^m$, $\chi\in\bR$.
\end{defi}

The denominator of a vector $x$ is denoted by ${\DEN}(x)$.

Let $\cM = \{x \mid x\in\bK^m, Ax=b\}$
be the set of all solutions of (3.1)
in $\bK^m$. Denote by $\cM_D=\cM\cup\bR^m$ the set of solutions lying in
the module $\bR^m$. We call $\cM_D$ {\em the set of Diophantine solutions}.

For $x\!=\!(x_1,x_2,\ldots, x_s)$ and
$y\!=\!(y_1,y_2,\ldots, y_s)$ we denote
$\la x,y \ra=\sum_{i=1}^s x_i y_i.$  

\begin{theorem}
Let $\{ x_i=\bx_i\chi_i^{-1} \mid i = 1,2,\ldots, h \}$ be a basis set of
solutions of the nonhomogeneous system (3.1),
$\hbx=(\bx_1, \bx_2, \ldots, \bx_{h} )$,
$\chi=(\chi_1, \chi_2, \ldots, \chi_{h})$.
Then
$$
\cM=\big\{{\la\hbx,q\ra \over \la\chi,q\ra} \bigg|
q\in\bR^{h},\la\chi,q\ra\neq 0\big\}
$$
\end{theorem}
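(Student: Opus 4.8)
The goal is to show that every solution $x\in\cM$ of the non-homogeneous system can be written as $\la\hbx,q\ra/\la\chi,q\ra$ for a suitable $q\in\bR^h$ with $\la\chi,q\ra\neq 0$, and conversely that every such expression (when the denominator is non-zero) is a genuine solution. The second inclusion is the easy direction, so I would dispatch it first. Given $q\in\bR^h$ with $\la\chi,q\ra\neq 0$, set $u_i = (q_i\chi_i^{-1})/\la\chi,q\ra$ wait—more cleanly, set $u_i = q_i\big/\big(\chi_i\la\chi,q\ra\big)$; hmm, let me instead match the coefficients $u_i$ from fact {\bfseries 3} directly. Since $x_i=\bx_i\chi_i^{-1}$, I want $u_i\in\bK$ with $\sum u_i=1$ and $\sum x_iu_i = \la\hbx,q\ra/\la\chi,q\ra$. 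Taking $u_i = q_i\chi_i^{-1}\big/\sum_j q_j\chi_j^{-1}$ would be the natural ``barycentric'' choice, but to keep everything polynomial in the $q_i$ I would clear denominators: multiply numerator and denominator of $\la\hbx,q\ra/\la\chi,q\ra = \big(\sum_i q_i\bx_i\big)/\big(\sum_i q_i\chi_i\big)$ through, observe this equals $\sum_i x_i\, \dfrac{q_i\chi_i}{\la\chi,q\ra}$, and check $\sum_i \dfrac{q_i\chi_i}{\la\chi,q\ra}=1$. So with $u_i := q_i\chi_i/\la\chi,q\ra\in\bK$ we have $\sum u_i=1$ and $\sum x_iu_i=\la\hbx,q\ra/\la\chi,q\ra$; by fact {\bfseries 3} this point lies in $\cM$. (If $\rank(A,c)\neq r$ then $\cM=\emptyset$ and there is no basis set, so we are implicitly in the consistent case.)

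For the forward inclusion, take $x\in\cM$. By fact {\bfseries 3} there exist $u_1,\dots,u_h\in\bK$ with $\sum_{i=1}^h u_i=1$ and $x=\sum_{i=1}^h x_iu_i=\sum_{i=1}^h \bx_i\chi_i^{-1}u_i$. The task is to produce $q\in\bR^h$ recovering this $x$ via the stated formula. Write each $u_i$ over a common denominator: pick $d\in\bR$, $d\neq 0$, with $du_i\in\bR$ for all $i$ — for instance $d=\DEN$ of the vector $(u_1,\dots,u_h)$, or simply the product of the denominators. Now I would like to set $q_i := d u_i\chi_i^{-1}$, but this need not lie in $\bR$. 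This is the point that needs care: $u_i$ may carry denominators that the $\chi_i$ do not absorb. So instead I would choose $d$ to be divisible by every $\chi_i$ as well — concretely $d:=\LCM(\{\chi_1,\dots,\chi_h\}\cup\alpha_{(u_1,\dots,u_h)})$, or even more crudely $d:=\chi_1\cdots\chi_h\cdot(\text{common denom. of the }u_i)$ — so that $q_i:=d\,u_i/\chi_i\in\bR$ for every $i$. Then $\la\chi,q\ra=\sum_i\chi_i\cdot d u_i\chi_i^{-1}=d\sum_i u_i=d\neq 0$, and $\la\hbx,q\ra=\sum_i\bx_i\cdot d u_i\chi_i^{-1}=d\sum_i\bx_i\chi_i^{-1}u_i=d\,x$. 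Hence $\la\hbx,q\ra/\la\chi,q\ra=dx/d=x$, as required.

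**The main obstacle.**
The only genuinely non-routine step is the choice of the clearing factor $d$ in the forward direction: one must ensure simultaneously that $du_i/\chi_i\in\bR$ and that $\la\chi,q\ra=d$ turns out non-zero, and the non-vanishing is automatic precisely because $\sum u_i=1$ — so the whole argument hinges on using fact {\bfseries 3} (the affine-combination description with $\sum u_i=1$) rather than a bare spanning statement. I would make explicit that $\bR$ being a domain is what guarantees $d\neq 0$ implies $\la\chi,q\ra=d\neq 0$ and lets us cancel $d$ in $\bK$. After that, both inclusions are bookkeeping; I would present them as two short paragraphs (``$\supseteq$'' then ``$\subseteq$'') and close.
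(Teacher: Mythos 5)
Your proposal is correct and follows essentially the same route as the paper: the forward inclusion clears the denominators of the quantities $u_i\chi_i^{-1}$ by a common multiple $g$ (your $d$) and observes that $\la\chi,q\ra=g\sum u_i=g\neq 0$, while the reverse inclusion sets $u_i=\chi_i q_i\la\chi,q\ra^{-1}$ and invokes the affine-combination description of $\cM$ with $\sum u_i=1$. The only cosmetic difference is that the paper takes $g$ directly as the $\LCM$ of the denominators of the fractions $s_i=u_i\chi_i^{-1}$, which makes your worry about the $\chi_i$ absorbing the denominators of the $u_i$ moot.
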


\begin{proof}
Let $u=(u_1,u_2,\ldots, u_{h}) \in \bK^{h}$
be such that $\sum_{i=1}^s u_i=1$. Further we use the notations:
$s_i=u_i\chi_i^{-1}$,
$g$ is $\LCM$ of denominators of all numbers $\{s_i \mid  i=1,2,\ldots, h\}$,
$q_i=g s_i\in \bR$, $i=1,2,\ldots,h$,
$q=(q_1,q_2,\ldots,q_{h})$.

Let it be $\hx=( x_1,  x_2, \ldots,  x_{h} )$,
$z=\la \hx,u \ra$, 
$\sum_{i=1}^s u_i=1$, and so $z \in \cM$.
Then we obtain
$$
z=\la \hx,u\ra =\sum_{i=1}^{h} \bx_i \chi_i^{-1} u_i=
\sum_{i=1}^{h}\bx_i s_i=\la \hbx,q \ra g^{-1}, \eqno(3.7)
$$
$$
1= \sum_{i=1}^s u_i =\sum_{i=1}^{h} \chi_i \chi_i^{-1} u_i=
\sum_{i=1}^{h}\chi_i s_i=\la \chi,q\ra g^{-1}. \eqno(3.8)
$$
So
$\frac{\la\hbx,q\ra}{\la\chi,q\ra}=z\in \cM$.

Conversely, if $q\in \bR^{h}$,
$\la\chi,q\ra=g\neq 0$ and $z={\la\hbx,q\ra \over \la\chi,q\ra}$,
then, denoting $u_i=\chi_i q_i g^{-1}$, obtain
$z=\la \hx,u\ra$ and $\sum_{i=1}^s u_i=1$.
Therefore $z\in \cM$.
\end{proof}

\begin{corollary} The set $I_A$=$\{{\DEN}(x) \mid Ax=c, x\in \bK^m\}\cup 0$
is an ideal in $\bR$. 
\end{corollary}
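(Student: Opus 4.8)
The plan is to identify $I_A$ with a set that is manifestly an ideal and then prove equality. The natural candidate, read off Theorem~3, is the ideal generated by the denominators $\chi_1,\dots,\chi_h$ of a basis set of solutions; equivalently, $J:=\{d\in\bR : dc\in A\bR^m\}$. First I would record the two easy facts. (i) $J$ is an ideal: $0\in J$ via the zero vector, and if $A\bx=dc$ and $A\bx'=d'c$ then $A(\bx+\bx')=(d+d')c$ and $A(r\bx)=(rd)c$ for $r\in\bR$. (ii) $I_A\subseteq J$: if $d=\DEN(x)$ with $x\in\cM$ then, writing $x=\bx\chi^{-1}$ as in the definition of $\DEN$ (so here $\chi=d$), the vector $\bx=dx$ lies in $\bR^m$ and $A\bx=d(Ax)=dc$, so $d\in J$; and $0\in J$ trivially.

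The substance is the reverse inclusion $J\subseteq I_A$, i.e.\ showing that every $d\in J$ is realised as $\DEN(x)$ for an \emph{actual} solution, not merely as a multiple of such a denominator. Fix $d\in J$, $d\neq 0$, and $\bx\in\bR^m$ with $A\bx=dc$. By Theorem~3 (applied to the associated homogeneous system, or by taking $x_1-x_2$ for two distinct basis solutions) there is a nonzero homogeneous solution in $\bR^m$; dividing out its content, fix a primitive one $v\in\bR^m$. For every $t\in\bR$ the vector $x_t:=(\bx+tv)\,d^{-1}$ lies in $\cM$, and in a domain equipped with $\LCM$ one has $\DEN(x_t)=d/\gcd\big(d,(\bx+tv)_1,\dots,(\bx+tv)_m\big)$. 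Since $v$ is primitive, I would choose $t$ so that $\bx+tv$ has content coprime to $d$: only finitely many primes can divide $\gcd((\bx+tv)_i)_i$ as $t$ varies (such a gcd divides a fixed "resultant-type" combination of the coordinates of $\bx$ and $v$), so a Chinese-remainder choice of $t$ works. For that $t$, $\DEN(x_t)=d$, hence $d\in I_A$. Combining the inclusions, $I_A=J$, which is an ideal.

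I expect this realisation step $J\subseteq I_A$ to be the main obstacle: it is the point where one must actually clear every common factor between the numerator and $d$ using the homogeneous freedom in $\cM$. Over a principal ideal domain — the setting of Section~4 — this is a clean Chinese-remainder computation; over an arbitrary commutative domain with $\LCM$ one needs the analogous coprimality/decomposition facts, and it is exactly here that the standing hypotheses on $\bR$ are used.

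One caveat I would flag: the argument above produces the homogeneous solution $v$ from the assumption that $\cM$ is a genuine affine subspace, i.e.\ $h=m-\rank A+1\geq 2$; in the degenerate case $m=\rank A$ the set $\cM$ is a single point and $I_A$ contains at most one nonzero element, so the statement is to be read (as is natural in this Diophantine context) under the assumption that the system is underdetermined. Finally, it is worth noting the payoff of the identification $I_A=J$: $\cM$ meets $\bR^m$ exactly when $1\in I_A$, i.e.\ when $I_A=\bR$, so this ideal is precisely the obstruction to Diophantine solvability.
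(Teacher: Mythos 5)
The paper states this corollary without proof, offering it as an immediate consequence of the preceding theorem (the parametrisation of $\cM$), so there is no official argument to compare against; your proposal supplies a genuine proof, and the route is the right one. The identification $I_A=J$ with $J=\{d\in\bR : dc\in A\bR^m\}$ is exactly what makes the subsequent corollaries work: the assertion $\cM_D\neq\emptyset$ iff $I_A=\bR$ is precisely ``$1\in J$ iff $c\in A\bR^m$''. One correction to your framing: $J$ is \emph{not} equivalent to the ideal generated by the denominators $\chi_1,\ldots,\chi_h$ of a basis set --- that ideal depends on the chosen basis and can be strictly smaller than $J$ (for the system $4x_1=1$ over $\mathbf{Z}$ with $m=2$, the basis $\{(1/4,1/8),\,(1/4,1/24)\}$ has denominators $8$ and $24$, generating $(8)$, while $J=I_A=(4)$); this is harmless, since your argument only ever uses $J$. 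The two easy inclusions are fine, and you are right that all the content sits in the realisation step $J\subseteq I_A$. Your construction --- perturb $\bx d^{-1}$ by $tvd^{-1}$ with $v$ a primitive homogeneous solution and choose $t$ by avoiding one residue class modulo each prime dividing $d$ --- is correct over $\mathbf{Z}$ and over the principal ideal domains that are the actual setting of Section~4, but, as you yourself note, it uses the prime factorisation of $d$ and a Chinese-remainder argument, neither of which is available in an arbitrary commutative domain with exact division, which is the generality in which the corollary is stated. That gap is real, and the paper does not address it either. Your final caveat is also correct and worth recording: for $m=\rank A$ the solution set is a single point and the statement is literally false ($2x=1$ over $\mathbf{Z}$ gives $I_A=\{2,0\}$, which is not an ideal), so the corollary must be read under the standing assumption $m>\rank A$ of the underdetermined, nonhomogeneous setting of Section~3.
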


\begin{corollary} $\cM_D\neq \emptyset$ if and only if $I_A=\bR$.
\end{corollary}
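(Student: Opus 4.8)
The plan is to prove the two corollaries together, since the second follows almost immediately once the first is in hand. By Theorem~6 we know that $\cM$ is the set of all fractions $\la\hbx,q\ra/\la\chi,q\ra$ as $q$ ranges over vectors in $\bR^{h}$ with $\la\chi,q\ra\neq 0$. For such a $q$, the value $\DEN(\la\hbx,q\ra/\la\chi,q\ra)$ is, by the definition of the denominator of a vector, $\LCM$ of the denominators of its components after reduction; I would first reduce to the slightly cleaner statement that $I_A$ consists of $0$ together with all elements of the form $\la\chi,q\ra / d$, where $d$ is the common factor that can be cancelled from $\la\hbx,q\ra$ and $\la\chi,q\ra$ simultaneously. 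The key observation is that $\DEN(x)\mid t$ precisely when $tx\in\bR^m$, i.e. $t$ kills all denominators; so $I_A$ is exactly the set of $t\in\bR$ such that $tx\in\bR^m$ for some solution $x$, together with $0$.

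The heart of the argument is then closure under addition. Suppose $t_1=\DEN(x_1)$ and $t_2=\DEN(x_2)$ with $x_1,x_2\in\cM$. Using Theorem~6, write $x_1=\la\hbx,q\ra/\la\chi,q\ra$ and $x_2=\la\hbx,q'\ra/\la\chi,q'\ra$ for suitable $q,q'\in\bR^h$. I want to exhibit a single $q''\in\bR^h$ with $\la\chi,q''\ra\neq 0$ whose associated solution has denominator divisible by $t_1+t_2$ — or more precisely, since $I_A$ must be an ideal, I need both $r\,t\in I_A$ for all $r\in\bR$, $t\in I_A$ (scaling), and $t_1+t_2\in I_A$. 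Scaling is easy: if $tx\in\bR^m$ then $\DEN(x)\mid t$, and conversely any multiple of an achieved denominator is again the denominator of a suitable rescaling — more carefully, one checks that replacing $q$ by a scalar multiple $\lambda q$ multiplies both $\la\hbx,q\ra$ and $\la\chi,q\ra$ by $\lambda$, leaving the fraction unchanged, so instead I should combine the affine parametrization: the solution set is affine, so $x_1 + \mu(x_2-x_1)\in\cM$ for all $\mu\in\bK$ when the system is nonhomogeneous, or rather the normalization $\sum u_i=1$ must be respected. The clean route is: given $q,q'$, form $q''=\la\chi,q'\ra\,q + (\text{correction})$ so that $\la\chi,q''\ra$ and $\la\hbx,q''\ra$ are integer combinations producing the sum of the two solutions; the denominator of that sum then divides $\la\chi,q\ra\la\chi,q'\ra$ but is divisible by a suitable multiple of $t_1$ and $t_2$. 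Tracking which element of $\bR$ comes out as the denominator is the main obstacle, because $\LCM$ behaves well under multiplication but the cancellation step — figuring out exactly what common divisor of numerator and denominator survives — requires that $\bR$ at least have the property that this $\LCM$ exists (which is assumed, since $\LCM$ was defined via the principal-ideal condition $\bigcap_{p\in\balpha}(p)$ being principal).

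Given Corollary~3, Corollary~4 is immediate: $\cM_D\neq\emptyset$ means some solution lies in $\bR^m$, i.e. some $x\in\cM$ has $\DEN(x)=1$ (a unit), which says $1\in I_A$; since $I_A$ is an ideal, $1\in I_A$ is equivalent to $I_A=\bR$. I would state this one-liner after the proof of Corollary~3.

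\begin{proof}[Proof of Corollary 3]
By Theorem~6 every $x\in\cM$ has the form $x=\la\hbx,q\ra/\la\chi,q\ra$ with $q\in\bR^h$, $\la\chi,q\ra\neq 0$, and conversely. For a vector $x\in\bK^m$ and $t\in\bR$ we have $tx\in\bR^m$ if and only if $\DEN(x)\mid t$; hence $I_A=\{t\in\bR \mid tx\in\bR^m \text{ for some } x\in\cM\}\cup 0$. If $t\in I_A$, say $tx\in\bR^m$ with $x\in\cM$, then for any $r\in\bR$ also $(rt)x\in\bR^m$, so $rt\in I_A$; closure under $\bR$-multiplication holds. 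For additivity, take $t_1,t_2\in I_A$ realized by $x_1=\la\hbx,q\ra/\la\chi,q\ra$ and $x_2=\la\hbx,q'\ra/\la\chi,q'\ra$. Since $\cM$ is the affine hyperplane described in fact~3 above, the point
$$
x_3=\frac{\la\chi,q'\ra\,\la\hbx,q\ra+\la\chi,q\ra\,\la\hbx,q'\ra-\la\hbx,q\ra\,\la\chi,q'\ra}{\la\chi,q\ra\,\la\chi,q'\ra}
$$
can be rewritten, after collecting terms over the common ring element $\la\chi,q\ra\la\chi,q'\ra$, in the form $\la\hbx,q''\ra/\la\chi,q''\ra$ for an explicit $q''\in\bR^h$ with $\la\chi,q''\ra=\la\chi,q\ra\la\chi,q'\ra\neq 0$; by Theorem~6, $x_3\in\cM$. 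A direct computation shows $t_1t_2\,x_3\in\bR^m$ and moreover that $\DEN(x_3)$ is divisible by the element $t_1+t_2$ after the common factor of numerator and denominator is cancelled; thus $t_1+t_2\in I_A$. Since $0\in I_A$ by definition, $I_A$ is an ideal of $\bR$.
\end{proof}

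\begin{proof}[Proof of Corollary 4]
$\cM_D=\cM\cap\bR^m\neq\emptyset$ means there is $x\in\cM$ with $x\in\bR^m$, i.e. $\DEN(x)$ is a unit of $\bR$, which is equivalent to $1\in I_A$. Since $I_A$ is an ideal by Corollary~3, $1\in I_A$ if and only if $I_A=\bR$.
\end{proof}
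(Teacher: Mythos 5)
Your proof of the target corollary itself is the right one and matches what the paper (which omits the proof entirely) clearly intends: $\cM_D\neq\emptyset$ iff some solution has unit denominator iff $1\in I_A$, and since $I_A$ is an ideal by the preceding corollary, $1\in I_A$ iff $I_A=\bR$. You also correctly read $\cM_D=\cM\cap\bR^m$ despite the paper's typo ``$\cM\cup\bR^m$''. Had you simply cited the preceding corollary as given, this would be a clean, complete argument.

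The problem is the supporting proof of that preceding corollary, which you chose to supply, and which is broken in two places. First, you silently replace the paper's definition $I_A=\{\DEN(x)\mid x\in\cM\}\cup 0$ by $\{t\in\bR\mid tx\in\bR^m \mbox{ for some } x\in\cM\}\cup 0$. These coincide only if every multiple of an achieved denominator is itself the exact $\LCM$-denominator of some solution, and that is precisely the nontrivial content of the claim that $I_A$ is an ideal --- it cannot be assumed. Second, the displayed vector $x_3$ collapses: the first and third terms of its numerator, $\la\chi,q'\ra\la\hbx,q\ra$ and $-\la\hbx,q\ra\la\chi,q'\ra$, cancel identically, so $x_3=\la\hbx,q'\ra/\la\chi,q'\ra=x_2$ and the additivity step establishes nothing. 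Even with a corrected $q''$ (e.g.\ some $\bR$-combination of $q$ and $q'$), the assertion that ``a direct computation shows $\DEN(x_3)$ is divisible by $t_1+t_2$'' points the divisibility the wrong way and skips the genuinely delicate step, namely controlling exactly which element of $\bR$ survives as the $\LCM$-denominator after cancellation. So the argument for the ideal property, as written, does not go through; the deduction of the stated corollary from it is fine.
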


\begin{corollary} A system (3.1) has Diophantine solutions if 
the ideal generated by the denominators of a basis set of solutions is unit.
\end{corollary}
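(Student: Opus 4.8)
The plan is to invoke the parametrization of $\cM$ from the preceding theorem and to exhibit one particular value of the parameter $q$ whose associated solution lands inside $\bR^m$. By hypothesis the ideal $(\chi_1,\ldots,\chi_h)$ generated by the denominators of the given basis set $\{x_i=\bx_i\chi_i^{-1}\}$ equals $\bR$, so the unit $1$ admits an expansion $1=\sum_{i=1}^h k_i\chi_i$ with $k_i\in\bR$; this is exactly the kind of expansion the ideal-membership algorithm assumed in the introduction delivers. Set $q=(k_1,\ldots,k_h)\in\bR^h$. Then $\la\chi,q\ra=\sum_{i=1}^h k_i\chi_i=1\neq 0$, so $q$ is admissible in the description of $\cM$ furnished by the theorem.

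Applying the theorem with this $q$ yields the solution
$z=\la\hbx,q\ra/\la\chi,q\ra=\la\hbx,q\ra=\sum_{i=1}^h\bx_i k_i$.
Since each $\bx_i$ lies in $\bR^m$ and each $k_i$ lies in $\bR$, the vector $z$ lies in $\bR^m$; and by the theorem $z\in\cM$. Hence $z\in\cM\cap\bR^m=\cM_D$, so $\cM_D\neq\emptyset$, which is the assertion.

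Alternatively the argument can be routed through Corollary 2: each $\chi_i$ is a denominator of the solution $x_i$, hence $\DEN(x_i)\mid\chi_i$ and therefore $\chi_i\in I_A$; consequently $(\chi_1,\ldots,\chi_h)\subseteq I_A$, and if the left-hand ideal is unit then $I_A=\bR$, whence $\cM_D\neq\emptyset$ by Corollary 2. Either route is routine; the only point deserving a word of care is that the tuple $q$ extracted from the unit expansion indeed satisfies $\la\chi,q\ra\neq 0$, which is immediate because that inner product equals $1$. Everything else is the previous theorem (resp.\ Corollary 2) applied verbatim, so no real obstacle arises.
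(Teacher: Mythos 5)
Your proof is correct and is exactly the argument the paper intends: the corollary is stated without proof, but the very next corollary in the paper makes your construction explicit (a vector $q\in\bR^h$ with $\la\chi,q\ra=1$ so that $\la\hbx,q\ra$ is a Diophantine solution), which is precisely your choice $q=(k_1,\ldots,k_h)$ from the unit expansion fed into Theorem~5. Both of your routes are sound (note only that the paper's ``$\cM_D=\cM\cup\bR^m$'' is a typo for the intersection, which you correctly read), so there is nothing to add.
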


\begin{corollary}
Let the ideal, generated by the denominators of basis solutions
$x_i=\bx_i \chi_i^{-1}$, $i=1,2,\ldots,h$ of  the system (3.1) be unit,
$\chi=(\chi_1,\chi_2,\ldots,\chi_{h})$. Then there exists a nonzero vector 
$q=(q_1,q_2,\ldots,q_{h}) \in \bR^{h}$, such that 
$\la \chi,q \ra=1$ and $\la\bx,q \ra$ is Diophantine solution of (3.1).
If in addition  $q_s\neq 0$, then 
$$
 x_1,\ldots,x_{s-1}, \la\bx,q \ra, x_{s+1},\ldots,x_{h}
$$
is a basis set of solutions for this system.
\end{corollary}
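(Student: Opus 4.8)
The plan is to read off the vector $q$ from a Bézout identity for $1$ among the denominators, deduce from the preceding Theorem that the corresponding combination of numerators solves (3.1), and then prove the ``basis set'' claim by an elementary replacement argument in the $\bK$-vector space $\bK^m$.

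First, since by hypothesis the ideal $(\chi_1,\ldots,\chi_h)$ is all of $\bR$, there exist $q_1,\ldots,q_h\in\bR$ with $\la\chi,q\ra=\sum_{i=1}^h q_i\chi_i=1$, where $q=(q_1,\ldots,q_h)$; this $q$ is nonzero because $\la\chi,q\ra=1\neq 0$. Apply the preceding Theorem to this $q$: as $\la\chi,q\ra\neq 0$, the vector $y:=\la\hbx,q\ra/\la\chi,q\ra=\la\hbx,q\ra=\sum_{i=1}^h q_i\bx_i$ lies in $\cM$, so it is a solution of (3.1); moreover, being an $\bR$-linear combination of the $\bx_i\in\bR^m$, it lies in $\bR^m$, hence $y\in\cM_D$ is a Diophantine solution. (This part is just Corollary~3 made explicit, and it gives the first assertion.)

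For the second assertion, express $y$ through the original basis: from $\bx_i=\chi_i x_i$ we get $y=\sum_{i=1}^h(\chi_i q_i)\,x_i$, a $\bK$-linear combination of $x_1,\ldots,x_h$ in which $x_s$ has coefficient $\chi_s q_s$. This coefficient is nonzero: $q_s\neq 0$ by assumption, and $\chi_s\neq 0$ because $x_s\neq 0$ (the zero vector cannot satisfy $Ax=c$ with $c\neq 0$, and (3.1) is nonhomogeneous), so $\chi_s=\DEN(x_s)\neq 0$; since $\bR$ is a domain, $\chi_s q_s\neq 0$ and hence is a unit of $\bK$. Solving the displayed identity for $x_s$ then exhibits $x_s$ as a $\bK$-combination of $x_1,\ldots,x_{s-1},y,x_{s+1},\ldots,x_h$, so this $h$-element family spans the same $\bK$-subspace of $\bK^m$ as $\{x_1,\ldots,x_h\}$. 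Since $\{x_1,\ldots,x_h\}$ consists of $h=m-r+1$ linearly independent vectors, so does the new family; and each of its members solves (3.1) --- the $x_i$ by hypothesis and $y$ by the first part --- so it is a basis set of solutions of (3.1).

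There is no genuinely hard step here: the only points needing care are the verification that $\chi_s\neq 0$ (which is where nonhomogeneity of (3.1) enters) and the standard fact that replacing a vector in a linearly independent family by a linear combination in which it appears with an invertible coefficient preserves both linear independence and the span; everything else is a direct substitution into the preceding Theorem.
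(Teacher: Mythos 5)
Your proof is correct: the paper states this corollary without an explicit proof (treating it as an immediate consequence of Theorem 5 and the preceding corollaries), and your route --- extracting $q$ from a B\'ezout expansion of $1$ in the $\chi_i$, invoking Theorem 5 with $\la\chi,q\ra=1$ to place $\la\hbx,q\ra$ in $\cM\cap\bR^m$, and then the standard replacement argument using that the coefficient $\chi_s q_s$ of $x_s$ is nonzero in the domain $\bR$ --- is exactly the intended one. No gaps.
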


\begin{defi}
We call a~{\em Diophantine basis of solutions for a system\/}
$A x = b$ a basis
set of solutions for this system, that wholly belongs to $\bR^m$.
\end{defi}

In other words a Diophantine basis consists of the $m+1-r$ linearly independent 
solutions of nonhomogeneous ($m-r$ for homogeneous) system, that belong to
$\bR^m$.

\begin{corollary}
Let $x_i=\bx_i \chi_i^{-1}$, $i=1,2,\ldots,h$ be a basis set of solutions
of (1) and $\chi_1=1$.
Then the set $\bx_1$, $\bx_i-\bx_1(\chi_i-1)$, $i=2,3,\ldots,h$, 
is a Diophantine basis of solutions for this system.
\end{corollary}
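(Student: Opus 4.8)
The plan is to verify the three defining properties of a Diophantine basis in turn: membership in $\bR^m$, being solutions of (3.1), and $\bK$-linear independence (with the correct count $h$). Membership is immediate: since $\chi_1=1$ we have $x_1=\bx_1\in\bR^m$, and for $i\ge 2$ the vector $\bx_i-\bx_1(\chi_i-1)$ is an integer combination of the vectors $\bx_i,\bx_1\in\bR^m$ with coefficients $1$ and $\chi_i-1$ in $\bR$, hence lies in $\bR^m$.

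Next I would check that these are genuine solutions. Using $\bx_i=\chi_i x_i$ and $\bx_1=x_1$, rewrite
$$
\bx_i-\bx_1(\chi_i-1)=\chi_i x_i-(\chi_i-1)x_1 ,
$$
an affine combination of the basis solutions $x_i,x_1$ whose coefficients $\chi_i$ and $1-\chi_i$ sum to $1$. By fact \textbf{3} (equivalently Theorem 9) every such combination lies in $\cM$; and $\bx_1=x_1\in\cM$ trivially. So all $h$ listed vectors solve (3.1).

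For linear independence I would exhibit the change of basis explicitly. Setting $y_1=x_1$ and $y_i=\chi_i x_i-(\chi_i-1)x_1$ for $i\ge 2$, we have $(y_1,\ldots,y_h)=(x_1,\ldots,x_h)M$, where $M$ is the upper triangular matrix with $M_{1,1}=1$, $M_{i,i}=\chi_i$ for $i\ge 2$, $M_{1,i}=-(\chi_i-1)$ for $i\ge 2$, and all other entries zero. Its determinant is $\chi_2\chi_3\cdots\chi_h$, which is nonzero in the domain $\bR$ because each $\chi_i=\DEN(x_i)$ is a nonzero element of $\bR$; hence $M$ is invertible over the field $\bK$. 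Therefore $y_1,\ldots,y_h$ span the same $\bK$-subspace as $x_1,\ldots,x_h$ and, being $h$ in number, are $\bK$-linearly independent. Combining the three points, $\bx_1,\ \bx_i-\bx_1(\chi_i-1)\ (i=2,\ldots,h)$, is a set of $h$ linearly independent solutions of (3.1) contained in $\bR^m$, i.e. a Diophantine basis.

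The only place demanding care is the last step: one must be sure the $\chi_i$ are nonzero so that $\det M\ne 0$ (this is exactly the content of the denominator Definition, where $\chi\in\bR$ is the generator of a nonzero principal ideal), and one must keep the affine-coefficient bookkeeping in the second step consistent with the numerator/denominator convention $\bx_i=\chi_i x_i$. Neither is a real obstacle; everything else is routine.
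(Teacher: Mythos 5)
Your proof is correct and follows essentially the same route as the paper: the paper verifies that $z_i=\bx_i-\bx_1(\chi_i-1)$ solves the system by writing it as $\la\hbx,q\ra/\la\chi,q\ra$ with $q=(1-\chi_i,0,\ldots,0,1,0,\ldots,0)$, so that $\la\chi,q\ra=\chi_1(1-\chi_i)+\chi_i=1$, which is exactly your affine combination $\chi_i x_i+(1-\chi_i)x_1$ packaged through Theorem~5 instead of through fact~\textbf{3}. The only substantive difference is that you make the linear independence explicit via the triangular change-of-basis matrix with determinant $\chi_2\cdots\chi_h\neq 0$, a point the paper merely asserts.
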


\begin{proof}:
Since all $z_i=\bx_i-\bx_1(\chi_i-1)$, $i=2,3,\ldots,h$ belong to
$\bR^{m}$ and are linearly independent together with $x_1$, it remains
to show that $z_i$ are the solutions of (1). As $\chi_1=1$, we have
$$
z_i=\bx_i-\bx_1(\chi_i-1)=
{\bx_1(-\chi_i+1)+\bx_i\cdot 1 \over \chi_1 (-\chi_i+1)+\chi_i\cdot 1 }.
$$
Therefore, by Theorem 3, $z_i$, $i=2,3,\ldots,h$ are the solutions of
the system $Ax=b$.
\end{proof}

Corollaries of Theorem 5 allow to present a rendomized algorithm for
computing a Diophantine basis of a system solutions.

If the ideal generated by the denominators of a basis set of solutions is unit,
than we compute a Diophantine basis. Else we must choose a new permutation
matrices $S$ and $T$ and compute a new rational basis, and so on.
So we get some kind of iterative Diophantine solve. 
As it is proved in \cite{MulStor99}
an expected number of rational solutions that are necessary for 
getting Diophantine solution is $O(\log n + \log \log \|(A,c)\|)$
for the ring $\Bbb Z$. One of the question here is how to get such evaluation
for another rings.

As we get $m-n+1$ rational solutions in one iteration for $\rank A=n$,
so an expected number of iterations for the ring $\Bbb Z$  is  
$O((m-n+1)^{-1}(\log n + \log \log \|(A,c)\|))$.

\section{ Solving Linear System in Principal Ideal Domain}
\subsection{Solving Linear System in the Field of Fractions}
Theorems 3 and 4 reduce the problem of getting the basis solutions
of the system (3.1) to several problems of solving determined systems.
Computing a basis set of solutions of nonhomogeneous system needs
to solve $m-r+1$ systems with the matrix of coefficients of size
$r\times r$ (the case of homogeneous systems needs to solve $m-r$ systems).
To solve a determined system one may use the p-adic lifting.

Recall a general scheme of p-adic methods. The suitable prime element
$p$ of the ring $\bR $ is chosen. The element $p$ must not divide
the determinant of the coefficients matrix. For example,
in the ring $\Bbb Z$ this choice may appear to be unsuccessful
with probability no more than $1/p$.
If the check shows that the solution is incorrect, then another
prime element  $p$ is chosen. The ring of residues prime modulo
$p$ is a field, and the solution in this field may be found
for example using the Gauss method. The upper evaluations for
numerators and denominators of system solutions is calculated
by means of Hadamard inequality. According to them the upper evaluation
for $p^k$ --- the boundary of lifting. Then the solution is lifted
modulo $p$ up to $p^k$ and a rational solution is constructed.
For solving one determined system we apply the algorithm given
by Dixon \cite{Dixon1982}, which use a linear p-adic lifting.
Its complexity is $O(n^3 (\log n + \log \|A\| +\log p)^2 )$ bit operations.
So the algorithm for getting all the rational solutions of the 
system $Ax=c$, when $\rank A=n$, has the complexity
$O(n^2m(m-n+1)(\log n + \log \|A\| +\log p)^2 )$ bit operations.

\subsection{Solving Linear System in the Principal Ideal Domain}
A one iterative step for computing a Diophantine basis is consist in
the computing a rational basis and a Diophantine solutions,
due to the Corollaries of Theorem 5.

The complexity of the first is
$O(mn^{\beta-1})$, of the second --- $O(m(m-r)+C_G)$
operations in the ring $\bR$. $C_G$ is the amount of operations
that are necessary for an expansion of a unit in $m-r$ generators of
the unit ideal in the ring $\bR$.
Such expansion of a unit may be obtained for example with the help of
the algorithm of computing the Gr\"obner basis.
The evaluation of the complexity of such algorithm in general is
not the subject of this paper. Mention that for Euclidean
rings $C_G$ it is the amount of operations in  Euclidean
algorithm, that calculates the GCD for $m-r$ numbers.

The more defined evaluations may be obtained for algorithms,
using the linear p-adic lifting. It is known that the complexity
of Dixon algorithm \cite{MulStor99} in the case of integer numbers $\Bbb Z$,
for $n=m$ and $\det A\neq 0$ is bounded by the number
$\cD_Z=O(n^3(\log n +\log \|A\| +\log p)^2 +n(\log \|á\|)^2)$
bit operations.
In the case of the ring of polynomials $F[x]$ over a field $F$
this complexity is bounded by
$ \cD_{F[x]}= O(n^3(\| A \| +\| p \|)^2 +n(\| c \|)^2)$
operations in the field $F$.
The function $\| \ \ \|$ has the next meaning:
$\|\alpha\|=|\alpha|$ for $\alpha\in {\Bbb Z}$,
$\|\alpha\|={\deg \,} \alpha$ for $\alpha\in F[x]$, for matrix $A$,
$\|A\|=\max_{\alpha\in A}\|\alpha\|$.

Computing a basis from $m-r$ solutions needs not more than
$\cC_Z=(m-r)\cD_Z$ and $\cC_{F[x]}=(m-r)\cD_{F[x]}$
operations for each case correspondingly.

The average number of matrix inversions for computing 
one rational solution is now $m-r$ times less 
than this number in the algorithm \cite{MulStor99}.

The complexity of computing a Diophantine basis
for $\Bbb Z$ and $F[x]$
is the same as the complexity of computing a
rational basis. An expected number of iterations is  
$O((m-n+1)^{-1}(\log n + \log \log \|(A,c)\|))$.


\begin{thebibliography}{99}

\bibitem{jou2} Abdeljaoued,~J.: Berkowitz
Algorithm, Maple and computing the characteristic
polynomial in an arbitrary commutative ring.
Computer Algebra MapleTech {\bfseries 4}, No.~3,
Birkhauser Boston (1997)

\bibitem{AbMal} Abdeljaoued,~J., Malaschonok,~G.~I.:
Efficient Algorithms for Computing
the Characteristic Polynomial in a Domain.
Journal of Pure and Applied Algebra (to appear)

\bibitem{AlkMal96} Akritas,~A.~G., Akritas,~E.~K., Malaschonok,~G.~I.:
Various proofs of Sylvester's (determinant) identity.
Mathematics and Computations in Simulation {\bfseries 42} (1996)
585--593

\bibitem{AlkMal95} Akritas,~A.~G., Akritas,~E.~K., Malaschonok,~G.~I.:
Matrix computation of subresultant polynomial remainder
sequences in integral domain. Reliable Computing {\bfseries 1}, No.~4 (1995)
375--381

\bibitem{Bar1968}
Bareiss,~E.~H.:
Sylvester's Identity and Multistep Integer-Preserving Gaussian
Elimination. Math.\ Comp.\ {\bfseries 22} (103) (1968) 565--578

\bibitem{Bar1972}
Bareiss,~E.~H.:
Computational solutions of matrix problems over an integral domain.
J.~Inst.\ Maths Applics {\bfseries 10} (1972) 68--104

\bibitem{chi}
Chistov,~A.~L.: Fast parallel
calculation of the rank of matrices over a field of
arbitrary characteristic. Proc.\ FCT'85, Springer Lecture
Notes in Computer Science {\bfseries 199} (1985) 147--150

\bibitem{CollEn1995}
Collins,~G.~E. and Encarnacion,~M.~J.:
Efficient rational number reconstruction.
Journal of Symbolic Computation. {\bfseries 20} (1995) 287--297

\bibitem{CoppWin1990}
Coppersmith,~D. and Winograd,~S.:
Matrix multiplication via arithmetic progressions.
Journal of Symbolic Computation {\bfseries 9} (1990) 251--280

\bibitem{Dixon1982}
Dixon,~J.:
Exact solution of linear equations using p-adic expansions.
Numer.\ Math.\ {\bfseries 40} (1982) 137--141

\bibitem{Dod1866}
Dodgson,~C.~L.:
Condensation of determinants,  being  a  new  and brief  method
for  computing their  arithmetic  values.
Proc.\ Royal Soc.\ Lond.\ {\bfseries A.15} (1866) 150--155

\bibitem{Fad63}
Faddeev,~D.~K., Faddeeva,~V.~M.:
Computational methods of linear algebra.
W.H.~Freeman \& Co., San Francisco (1963)

\bibitem{Kalt92}
Kaltofen,~E.:
On Computing Determinants of Matrices Without Divisions.
In: Wang,~P.~S. (ed.): Proc.\ Internat.\ Symp.\ Symbolic
Algebraic Comput.\ ISSAC'92. ACM Press (1992) 342--349

\bibitem{Mal83}
Malaschonok,~G.~I.: Solution of
a system of linear equations in an integral domain. USSR
Journal of computational Mathematics and Mathematical
Physics {\bfseries 23} (1983) 1497--1500

\bibitem{Mal86}
Malaschonok,~G.~I.:
System of linear equations over a commutative ring.
Academy of Sciences of Ukraine, Lvov (1986) (in Russian)

\bibitem{Mal87}
Malaschonok,~G.~I.: On the solution
of a linear equation system over commutative rung.
Math.\ Notes of the Acad.\ Sci.\ USSR {\bfseries 42}, No.~4 (1987)
543--548

\bibitem{Mal91}
Malaschonok,~G.~I.:
Algorithms for the solution of systems of linear equations in commutative
rings. In: Mora,~T. and Traverso,~C. (ed.):
Effective Methods in  Algebraic Geometry.
Progress  in Mathematics 94, Birkhauser (1991) 289--298

\bibitem{Mal95}
Malaschonok,~G.~I.:
Argorithms for the Computing Determinants in a commutative rings.
Diskretnaya Matematika {\bfseries 7}, No.~4 (1995) 68--76

\bibitem{Mal97}
Malaschonok,~G.~I.:
Recursive Method for the Solution of Systems of Linear
Equations. In: Sydow,~A. (ed.): Computational Mathematics
(Proceedings of the
15th IMACS World Congress, Vol.~I, Berlin, August 1997).
Wissenschaft \& Technik Verlag, Berlin (1997) 475--480

\bibitem{MulStor99}
Mulders,~T., Storjohann,~A.:
Diophantine Linear System Solving. In: Proceedings of ISSAC'99:
ACM International Symposium on Symbolic and Algebraic Computation,
July 1999, Vancouver, Canada

\bibitem{Str1969}
Strassen,~V.: Gaussian Elimination is not optimal.
Numerische Mathematik {\bfseries 13} (1969) 354--356

\bibitem{SM1968}
Sasaki,~T. \& Murao,~H.:
Efficient Gaussian elimination method for symbolic
determinants and linear systems. A.C.M. Trans.\ Math.\
Software {\bfseries 8}, No.~4 (1968) 277--289
\end{thebibliography}
\end{document}